\def\footnoterule{\kern 1mm \hrule width 10cm \kern 2mm}
\def\title#1{\vspace{3mm}\begin{flushleft}\vglue-.1cm\Large\bf\boldmath\protect\baselineskip=18pt plus.2pt minus.1pt #1
\end{flushleft}\vspace{1mm} }
\def\author#1{\begin{flushleft}\normalsize #1\end{flushleft}\vspace*{-4pt} \vspace{3mm}}
\def\jz#1#2{{$^{\footnotesize\textcircled{\tiny #1}}$\let\thefootnote\relax\footnotetext{\!\!$^{\footnotesize\textcircled{\tiny #1}}$#2}}}
\def\section{\@startsection{section}{1}{\z@}%
 {-3ex \@plus -.3ex \@minus -.2ex}%
 {2.2ex \@plus.2ex}%
{\normalfont\normalsize\protect\baselineskip=14.5pt plus.2pt minus.2pt\bfseries}}
\def\subsection{\@startsection{subsection}{2}{\z@}%
 {-3ex\@plus -.2ex \@minus -.2ex}%
 {2ex \@plus.2ex}%
{\normalfont\normalsize\protect\baselineskip=12.5pt plus.2pt minus.2pt\bfseries}}
\def\subsubsection{\@startsection{subsubsection}{3}{\z@}%
 {-2.2ex\@plus -.21ex \@minus -.2ex}%
 {1.4ex \@plus.2ex}
{\normalfont\normalsize\protect\baselineskip=12pt plus.2pt minus.2pt\sl}}
\newtheorem{theorem}{Theorem}
\DeclarePairedDelimiter\ceil{\lceil}{\rceil}
\begin{document}
\begin{CJK*}{GBK}{song}
\thispagestyle{empty}
\title{Freeze-Tag is NP-Hard in 3D with $L_1$ distance}

\author{Lucas de Oliveira Silva}

\vspace*{4mm}
\noindent {\small\bf Abstract} \quad  {\small Arkin et al. \cite{Arkin06} in 2002 introduced a scheduling-like problem called Freeze-Tag Problem (FTP) motivated by robot swarm activation. The input consists of the locations of $n$ mobile punctual robots in some metric space or graph. Only one begins ``active", while the others are initially ``frozen". All active robots can move at unit speed and, upon reaching a frozen one's location, activates it. The goal is to activate all the robots in the minimum amount of time, the so-called makespan. Until 2017 the hardness of this problem in metric spaces was still open, but then Yu et al. \cite{Yu17} proved it to be NP-Hard in the Euclidian plane, and in the same year, Demaine and Roudoy \cite{Erik17} demonstrated that the FTP is also hard in 3D with any $L_p$ distance (with $p>1$). However, we still don't know whether Demaine's and Roudoy's result could be translated to the plane. This paper fills the $p=1$ gap by showing that the FTP is NP-Hard in 3D with $L_1$ distance.}

\vspace*{3mm}
\noindent{\small\bf Keywords} \quad {\small Complexity, NP-Hardness, Scheduling, Computational Geometry, Swarm Robotics}
\vspace*{4mm}
\end{CJK*}

\baselineskip=18pt plus.2pt minus.2pt
\parskip=0pt plus.2pt minus0.2pt
\begin{multicols}{2}

\section{Introduction}
The Freeze-Tag Problem was introduced by Arkin et al. in 2002 \cite{Arkin06} to model the automatic awakening of a robot swarm by manually turning on just one of the individuals. The input consists of a list of $n$ mobile punctual robots' locations in some metric space or graph. Only one of them begins ``active" (``on", ``unfrozen" or ``awake") and is called the source, while the others are initially ``frozen" (``off", ``frozen" or ``asleep"). Frozen robots become active when an active one reaches its location. After activated, a robot can move at unit speed and help unfreeze the remaining frozen ones. In the decision version, we are also given a time limit, and the goal is to decide whether all the robots can be activated within this limit.

Depending on the choice of metric (defining ``unit speed"), domain, or some other restriction, the problem takes different forms. For example, Arkin et al. \cite{Arkin06} proved that the FTP is NP-Hard on Weighted Star Graphs, but is in P when restricted to the unweighted ones. Moreover, by Bender et al. \cite{Arkin03}, we know that it is hard on Unweighted General Graphs. The first ones to tackle the NP-Hardness in metric spaces were Yu et al. at FWCG 2017 \cite{Yu17}, who proved the problem NP-Hard in the Euclidian plane, and later Demaine with Rudoy, that showed that the problem is also hard in 3D with any $L_p$ distance (with $p>1$). Our result complements the work of Demaine and Rudoy by proving that the FTP is also NP-Hard in 3D with $L_1$ distance.

\section{Reduction}
To prove the hardness claim, we reduce from $Monotone$-$3SAT$, shown to be NP-Complete in \cite{Go78}. An instance of this problem consists of a set of boolean variables $\{x_1, \dots, x_n\}$ and clauses $\{c_1, \dots, c_m\}$ in which three literals form each clause, and each literal is a, possibly negated, copy of one of the variables. Each clause is either \emph{positive} when all its literals are not negated or \emph{negative} when all are negated. The question is whether a satisfying boolean assignment to the variables exists, that is, an assignment such that every clause contains at least one true literal. Throughout this paper, Latin uppercase letters will denote points and the lowercase robots.

\begin{theorem}
Freeze-Tag is NP-Complete in 3D with $L_1$ distance.
\end{theorem}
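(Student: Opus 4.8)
The plan is to establish both directions of NP-completeness. Membership in NP is routine: a feasible schedule can be encoded as an \emph{activation tree} that records, for each robot, who wakes it and the polyline route taken; this certificate has polynomial size and is checkable in polynomial time by summing $L_1$ segment lengths and comparing the induced makespan against the bound $T$. The substance lies in the NP-hardness reduction from \emph{Monotone-3SAT}, which the excerpt grants us as NP-complete. Given a formula $\phi$ with variables $x_1,\dots,x_n$ and clauses $c_1,\dots,c_m$, I would construct in polynomial time a collection of robot positions together with a threshold $T$ so that every robot can be activated by time $T$ if and only if $\phi$ is satisfiable.

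The geometric core consists of three interacting families of gadgets, all built from axis-aligned segments so that $L_1$ distances reduce to sums of coordinate differences. First, a \emph{distribution backbone} emanating from the source delivers one active robot to each variable gadget, with all arrivals synchronized to a common prescribed time; a balanced binary tree of wake-ups suffices, since its depth contributes a fixed additive constant to every path and so does not interfere with the logic. Second, each variable gadget confronts the incoming robot with a binary choice between a \emph{true-branch} and a \emph{false-branch}, arranged symmetrically so that the robot (possibly after spawning a helper) can traverse one branch on time but can reach the far end of the other only late; assigning the value true to $x_i$ corresponds to committing the early traversal to the true-branch. Third, each clause gadget contains a robot placed so \emph{deep} that it is reachable by $T$ only if at least one of its three literals supplies an on-time activator — positive clauses are wired to the true-outputs of their variables and negative clauses to the false-outputs, matching the monotone structure exactly.

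To couple the three families, I would tune the coordinates and the bound $T$ so that the schedule is critically tight: just enough robots become active and just enough slack remains that every helper must be spent along an intended route. Under such tightness, any schedule meeting $T$ forces each variable gadget into one of its two clean states, and a clause robot is reached on time precisely when a satisfying literal feeds it; conversely, from any satisfying assignment one reads off an explicit on-time schedule. Both implications then come down to verifying a finite list of $L_1$ path-length inequalities among the gadget coordinates.

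The main obstacle I anticipate is the degeneracy of the $L_1$ metric: unlike the strictly convex $L_p$ norms with $p>1$ exploited by Demaine and Rudoy, \emph{every} monotone axis-aligned staircase between two points is a shortest path, so a single robot might ``cheat'' by servicing several gadgets along one monotone route, or by reaching a clause through an unintended shortest path. Controlling this is the crux of the reduction. I would place gadgets at pairwise non-monotone coordinate positions, so that any route touching two of them must backtrack in some coordinate and thereby incur strictly positive extra $L_1$ length, and I would assign distinct roles to the three axes (say one for the backbone, one for the true/false branching, one for clause depth) so that the abundance of shortest paths cannot be recombined to beat the threshold. Making these separations simultaneously tight enough to force the logic yet loose enough to admit the intended schedule is the delicate balancing act on which the whole argument rests.
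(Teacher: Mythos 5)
There is a genuine gap: what you have written is a research plan, not a proof, and the one difficulty you correctly identify as the crux --- the degeneracy of $L_1$, where every coordinate-monotone staircase is a shortest path --- is left unresolved. Your proposed remedy (place gadgets at pairwise non-monotone positions so that no single robot can service two gadgets on one shortest path) is never instantiated, and you give no argument that such a placement can coexist with the ``critically tight'' timing your clause gadgets require; asserting that both implications ``come down to verifying a finite list of $L_1$ path-length inequalities'' is precisely the part of the proof that must actually be exhibited. More strikingly, the paper resolves the crux in the opposite way: it \emph{exploits} the degeneracy rather than suppressing it. Each literal group at $T_i$ (resp.\ $F_i$) is forced, by a counting argument (group sizes equal occurrence counts), to send one robot to a distant anchor robot $\mathbf{d}^i_k$ (resp.\ $\overline{\mathbf{d}}^i_k$), and under $L_1$ the clause-literal robot $\mathbf{c}^i_k$ lies on a shortest path to that anchor, so it is awakened at no extra cost --- at time $4+\epsilon$ if the literal was chosen true, but only $4+3\epsilon$ if false. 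Since $\mathbf{c}^i_k$ sits at distance exactly $2$ from the clause robot $\mathbf{c}_k$ and the makespan bound is $L=6+\epsilon$, only the early batch can reach $\mathbf{c}_k$ in time; this timing gap \emph{is} the satisfiability test, and nothing in your sketch supplies a substitute mechanism.

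Two further forcing devices in the paper have no analogue in your plan and are needed for the reverse implication. First, you route activation through a balanced binary backbone and claim its depth is ``a fixed additive constant''; the paper avoids this issue entirely by co-locating all $n$ fan-out robots $r_1,\dots,r_n$ at the source, so all are awake at time zero. Second, and more importantly, your variable gadget only posits that a robot ``can traverse one branch on time but the other only late''; you never force a schedule to visit \emph{both} branches, so a $T$-feasible schedule might leave an entire false-group frozen past its usefulness and your extracted assignment would not be well defined. The paper enforces this with the sentinel robots $s_i$, placed at distance $L-4\epsilon$ from the origin and pairwise separated by more than $4\epsilon$, so each $r_i$ must personally awaken $s_i$ and has exactly $4\epsilon$ of slack --- just enough to pass through both $t_i$ and $f_i$ (which differ only by $\pm\epsilon$ in the third coordinate) in some order, and that order encodes the truth value. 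Without concrete coordinates, a tightness analysis, and these two forcing mechanisms (or replacements for them), your outline cannot yet be checked, and its central announced strategy conflicts with the only known way the $L_1$ construction is made to work.
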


\begin{proof}
As shown in \cite{Yu16}, the problem is in NP.

Given an instance $X=\{x_1, \dots, x_n\}$, $C=\{c_1, \dots, c_m\}$ (with $n>1$ and $m>2$) of the $Monotone$-$3SAT$ we can assume, without loss of generality, that $n$ is odd and the number of positive and negative clauses are the same. If not, we can add three dummy variables and a single positive clause containing them and then duplicate some clauses until the number of each type is equal. So let $C = \{c_1, \dots, c_{m/2}\} \cup \{\overline{c}_1, \dots, \overline{c}_{m/2}\}$ where the overline represents a negative clause. We place robots as follows:

Let $\alpha=\frac{2}{n-1}$, $\epsilon<\alpha/4$, $\beta=\frac{2}{m-2}$ and $L=6+\epsilon$. At the origin we have $n$ robots, $r_1, r_2, \dots, r_n$, with $r_1$ being the source one. For each variable $x_{i+1}$, there are two groups of robots, a \emph{true} one $t_{i+1}$ at $T_{i+1} = (2-i\alpha; i\alpha; \epsilon)$, and a false one $f_{i+1}$ at $F_{i+1} = (2-i\alpha; i\alpha; -\epsilon)$. The first corresponds to the occurrences of $x_{i+1}$ as a not negated literal, and the second corresponds to the negated ones. So each group contains a robot for each occurrence of $x_{i+1}$ in a clause being not negated or negated, respectively. The awakening order of these groups will represent the boolean value of $x_{i+1}$, that is, $x_{i+1} = true$ iff $t_{i+1}$ is awakened before $f_{i+1}$. Additionally, we add another $n$ robots, $s_1, s_2, \dots, s_n$, each $s_{i+1}$ being located at $S_{i+1} = (2-i\alpha+(L-2-4\epsilon)/2; i\alpha+(L-2-4\epsilon)/2; 0)$.

Then for each $k=0, \dots, m/2-1$ and each variable $x_{i+1}$ that does appear (not) negated in clause ($c_{k+1}$) $\overline{c}_{k+1}$, we create four robots:

$\mathbf{c}^{i+1}_{k+1}$ at $\mathbf{C}^{i+1}_{k+1} = T_{i+1} + (k\beta; k\beta; 2-2k\beta)$,

$\mathbf{d}^{i+1}_{k+1}$ at $\mathbf{D}^{i+1}_{k+1} = \mathbf{C}^{i+1}_{k+1} + P$,

$\overline{\mathbf{c}}^{i+1}_{k+1}$ at $\overline{\mathbf{C}}^{i+1}_{k+1} = F_{i+1} + (k\beta; k\beta; -2+2k\beta)$ and

$\overline{\mathbf{d}}^{i+1}_{k+1}$ at $\overline{\mathbf{D}}^{i+1}_{k+1} = \overline{\mathbf{C}}^{i+1}_{k+1} - P$,

where $P = (0; 0; L - 4 - 3\epsilon)$.

Finally, for each $k=0, \dots, m/2-1$ we add two robots:

$\mathbf{c}_{k+1}$ at $\mathbf{C}_{k+1} = C^{\ceil{n/2}}_{k+1} + Q$ and

$\overline{\mathbf{c}}_{k+1}$ at $\overline{\mathbf{C}}_{k+1} = \overline{C}^{\ceil{n/2}}_{k+1} + Q$, where $Q = (1; 1; 0)$.

We will prove that there is a satisfying boolean assignment to the given $Monotone$-$3SAT$ instance iff this FTP instance has an awakening schedule, with $r_1$ as the source, of optimal makespan $L$. \break

$(\implies)$ Suppose that the given $Monotone$-$3SAT$ has a satisfying boolean assignment. Without loss of generality, suppose that $T=\{x_1, \dots, x_l\}$ are the variables assigned to $true$ and $F=X\backslash T = \{x_{l+1}, \dots, x_n\}$ the variables assigned to $false$. Consider this scheduling:

At time instant zero, all $r_i$'s become awake. Then for each $r_i \in T$, $r_i$ goes to group $t_i$ at point $T_i$, and for each $r_j \in F$, $r_j$ goes to group $f_j$ at point $F_j$.

These robots reach their target at $2+\epsilon$. Upon arrival, a number of robots equal to the number of occurrences of the corresponding target literal become awake.
Afterward, the $r_i$'s go to their opposite still frozen group, and the other robots inside that first group split paths. Each one at a $T_i$ point goes to a different $\mathbf{c}_k^i$ robot that $x_i$ appears in the corresponding $c_k$ clause, and those at a $F_i$ point analogously go to a different $\overline{\mathbf{c}}^i_k$.

The $r_i$'s will reach their new destination group at $2+3\epsilon$ when another similar splitting occurs, now seeking to awaken those missing $\mathbf{c}_k^i$'s and $\overline{\mathbf{c}}^i_k$'s. They will be unfrozen at $4+3\epsilon$ and go to the matching $\mathbf{D}^i_k$ ($\overline{\mathbf{D}}^j_l$) point above (below) arriving in $L$. After this, each $r_i$ heads directly to a $s_i$ robot, getting there at time $L$.

In contrast, by $4+\epsilon$, the first batch of $\mathbf{c}^i_k$'s and $\overline{\mathbf{c}}^j_l$'s had already become awake and went directly to the corresponding $\mathbf{c}_k$ ($\overline{\mathbf{c}}_k$). Those who awakened them are now going to the matching $\mathbf{D}^i_k$ ($\overline{\mathbf{D}}^j_l$) point above (below), getting there at $L-2\epsilon$ when they'll finish.

At $4+\epsilon$, at least one robot will be heading each $\mathbf{c}_k$ ($\overline{\mathbf{c}}_k$) since the boolean assignment is a satisfying one. Furthermore, all these remaining robots are unfrozen by $L$.

$(\impliedby)$ Suppose the constructed FTP instance has an awakening schedule, with $r_1$ as the source, of optimal makespan $L$. The $n$ $s_i$'s are each at a distance $L-4\epsilon$ from the origin, and the distance between them is at least $\alpha>4\epsilon$. So, without loss of generality, we can assume that for each $i$, $r_i$ awakens $s_i$. While going to $s_i$, each $r_i$ has $4\epsilon$ time to spare. So there is an optimal solution where for each $i$, $r_i$ awakens the groups $t_i$ and $f_i$ before going to $s_i$. Consider that we are in this case.

Therefore we can construct a boolean assignment to the given $Monotone$-$3SAT$ instance as follows: $x_i = true$ iff $t_i$ is awakened before $f_i$. \break

Without loss of generality, after being awakened, each robot at a $t_i$ ($f_i$) group must go directly to a different $\mathbf{d}_k^i$ ($\overline{\mathbf{d}}^i_k$) robot that $x_i$ appears in the corresponding $c_k$ ($\overline{c}_k$) clause as there are as many of these as themselves. Because we are dealing with $L_1$ distance, they can also unfreeze the corresponding $\mathbf{c}_k^i$ or $\overline{\mathbf{c}}^i_k$ in their path. Also we can assume that they do so, as this is the earliest that they could've become unfroze anyway.

Every $\mathbf{c}_k^i$ ($\overline{\mathbf{c}}^i_k$) is at distance $2$ from its corresponding $\mathbf{c}_k$ ($\overline{\mathbf{c}}_k$). The literals that are set to \emph{true} awakens the robots corresponding to the causes that they appear on at time $4+\epsilon$, but the other $\mathbf{c}_k^i$ and  $\overline{\mathbf{c}}^i_k$ robots are reached only at $4+3\epsilon$. So only the first batch can reach its corresponding $\mathbf{c}_k$ ($\overline{\mathbf{c}}_k$) in time. We've assumed that we are dealing with optimal scheduling of makespan $L$, so each $\mathbf{c}_k$ ($\overline{\mathbf{c}}_k$) is awakened in the end by a literal whose boolean value is \emph{true}.

Therefore the boolean assignment makes each clause contain at least one \emph{true} literal, making it a satisfying boolean assignment. \break

By these previous implications, the FTP in 3D with $L_1$ distance is NP-Hard, so by the first claim, its NP-Complete.
\end{proof}

\section{Conclusion}
The obvious open problem is whether the result of this paper could also be applied to the 2D case. Considering that the Euclidian counterpart is NP-Hard, the FTP in 2D with $ L_1 $ distance may also be.

\small
\bibliographystyle{alpha}
\bibliography{bib}

\end{multicols}
\end{document}